\documentclass[conference]{IEEEtran}
\IEEEoverridecommandlockouts
% The preceding line is only needed to identify funding in the first footnote. If that is unneeded, please comment it out.
\usepackage[T1]{fontenc}% optional T1 font encoding

% Some very useful LaTeX packages include:
% (uncomment the ones you want to load)

% *** MISC UTILITY PACKAGES ***
%
%\usepackage{ifpdf}
% Heiko Oberdiek's ifpdf.sty is very useful if you need conditional
% compilation based on whether the output is pdf or dvi.
% usage:
% \ifpdf
%   % pdf code
% \else
%   % dvi code
% \fi
% The latest version of ifpdf.sty can be obtained from:
% http://www.ctan.org/pkg/ifpdf
% Also, note that IEEEtran.cls V1.7 and later provides a builtin
% \ifCLASSINFOpdf conditional that works the same way.
% When switching from latex to pdflatex and vice-versa, the compiler may
% have to be run twice to clear warning/error messages.

% *** CITATION PACKAGES ***
%
%\usepackage{cite}
% cite.sty was written by Donald Arseneau
% V1.6 and later of IEEEtran pre-defines the format of the cite.sty package
% \cite{} output to follow that of the IEEE. Loading the cite package will
% result in citation numbers being automatically sorted and properly
% "compressed/ranged". e.g., [1], [9], [2], [7], [5], [6] without using
% cite.sty will become [1], [2], [5]--[7], [9] using cite.sty. cite.sty's
% \cite will automatically add leading space, if needed. Use cite.sty's
% noadjust option (cite.sty V3.8 and later) if you want to turn this off
% such as if a citation ever needs to be enclosed in parenthesis.
% cite.sty is already installed on most LaTeX systems. Be sure and use
% version 5.0 (2009-03-20) and later if using hyperref.sty.
% The latest version can be obtained at:
% http://www.ctan.org/pkg/cite
% The documentation is contained in the cite.sty file itself.

\newcommand{\Tsupersqueezeequ}{\medmuskip=0.1mu \thinmuskip=0mu \thickmuskip=0.1mu \nulldelimiterspace=-1pt \scriptspace=0pt}

% *** GRAPHICS RELATED PACKAGES ***
%
\ifCLASSINFOpdf
% \usepackage[pdftex]{graphicx}
% declare the path(s) where your graphic files are
% \graphicspath{{../pdf/}{../jpeg/}}
% and their extensions so you won't have to specify these with
% every instance of \includegraphics
% \DeclareGraphicsExtensions{.pdf,.jpeg,.png}
\else
% or other class option (dvipsone, dvipdf, if not using dvips). graphicx
% will default to the driver specified in the system graphics.cfg if no
% driver is specified.
% \usepackage[dvips]{graphicx}
% declare the path(s) where your graphic files are
% \graphicspath{{../eps/}}
% and their extensions so you won't have to specify these with
% every instance of \includegraphics
% \DeclareGraphicsExtensions{.eps}
\fi
% graphicx was written by David Carlisle and Sebastian Rahtz. It is
% required if you want graphics, photos, etc. graphicx.sty is already
% installed on most LaTeX systems. The latest version and documentation
% can be obtained at: 
% http://www.ctan.org/pkg/graphicx
% Another good source of documentation is "Using Imported Graphics in
% LaTeX2e" by Keith Reckdahl which can be found at:
% http://www.ctan.org/pkg/epslatex
%
% latex, and pdflatex in dvi mode, support graphics in encapsulated
% postscript (.eps) format. pdflatex in pdf mode supports graphics
% in .pdf, .jpeg, .png and .mps (metapost) formats. Users should ensure
% that all non-photo figures use a vector format (.eps, .pdf, .mps) and
% not a bitmapped formats (.jpeg, .png). The IEEE frowns on bitmapped formats
% which can result in "jaggedy"/blurry rendering of lines and letters as
% well as large increases in file sizes.
%
% You can find documentation about the pdfTeX application at:
% http://www.tug.org/applications/pdftex

% *** MATH PACKAGES ***
%
%\usepackage[fleqn]{amsmath}
\usepackage{blkarray}
\usepackage{multirow}
\newcommand{\rdots}{\hspace{.2ex}\raisebox{1ex}{\rotatebox{-12}{$\ddots$}}}

\usepackage{cite}
\usepackage{amsmath,amssymb,amsthm,amsfonts}
\usepackage{mathtools}
\usepackage{graphicx,caption,subcaption}
\usepackage{textcomp}
\usepackage{xcolor}
\usepackage{bm}

\usepackage[nameinlink]{cleveref}
\def\BibTeX{{\rm B\kern-.05em{\sc i\kern-.025em b}\kern-.08em
		T\kern-.1667em\lower.7ex\hbox{E}\kern-.125emX}}
\newtheorem{thm}{Theorem}
\newtheorem{lem}[thm]{Lemma}

\DeclarePairedDelimiterX{\norm}[1]{\lVert}{\rVert}{#1}
\newtheorem{defn}{Definition}

\usepackage{algorithm,
	algpseudocode}
% A popular package from the American Mathematical Society that provides
% many useful and powerful commands for dealing with mathematics.
% Do NOT use the amsbsy package under comsoc mode as that feature is
% already built into the Times Math font (newtxmath, mathtime, etc.).
% 
% Also, note that the amsmath package sets \interdisplaylinepenalty to 10000
% thus preventing page breaks from occurring within multiline equations. Use:
\interdisplaylinepenalty=2500
% after loading amsmath to restore such page breaks as IEEEtran.cls normally
% does. amsmath.sty is already installed on most LaTeX systems. The latest
% version and documentation can be obtained at:
% http://www.ctan.org/pkg/amsmath
\setlength{\abovedisplayskip}{6pt}
\setlength{\belowdisplayskip}{6pt}

% Select a Times math font under comsoc mode or else one will automatically
% be selected for you at the document start. This is required as Communications
% Society journals use a Times, not Computer Modern, math font.
\usepackage[cmintegrals]{newtxmath}
	\usepackage[normalem]{ulem}
	\usepackage{placeins}
	% correct bad hyphenation here
	\hyphenation{op-tical net-works semi-conduc-tor}
	\usepackage[displaymath, mathlines]{lineno}
	%\linenumbers
\begin{document}
	
	\title{Local Approximation of Secrecy Capacity}
	
 \author{%
	   \IEEEauthorblockN{Emmanouil M.~Athanasakos\IEEEauthorrefmark{1},		                     Nicholas~Kalouptsidis\IEEEauthorrefmark{2} and Hariprasad Manjunath\IEEEauthorrefmark{1}	                  
		                   }
	   \IEEEauthorblockA{\IEEEauthorrefmark{1}%
		                   INRIA, Centre Inria d’Université Côte d’Azur, Sophia Antipolis, France.}
	   \IEEEauthorblockA{\IEEEauthorrefmark{2}%
                            National and Kapodistrian University of Athens, Department of Informatics and Telecommunications, Athens, Greece.}
		                     
	 }
	
	\maketitle
	
	\begin{abstract}
		This paper uses Euclidean Information Theory (EIT) to analyze the wiretap channel. We investigate a scenario of efficiently transmitting a small amount of information subject to compression rate and secrecy constraints. We transform the information-theoretic problem into a linear algebra problem and obtain the perturbed probability distributions such that secrecy is achievable. Local approximations are being used in order to obtain an estimate of the secrecy capacity by solving a generalized eigenvalue problem.
	\end{abstract}
	
%	\begin{IEEEkeywords}
%		Information-theoretic security, Euclidean Information Theory, mutual information, KL-divergence
%	\end{IEEEkeywords}
	
	\section{Introduction}\label{sec_I}
	Security and privacy of information flows has become a critical requirement in system design. Currently, security is handled as an add on by the higher network layer protocols. However, with the thriving of the internet and the fast growth of mobile communications, reliance on this layered approach is inefficient. Incorporating security constraints as an inherent module by design, offers a framework in which security of information can be determined with quantitative metrics stemming from information theory.\par 
%	The pioneering work of Wyner~\cite{Wyner} and Csisz\'{a}r and K\"{o}rner~\cite{Csiz_Kor}, introduced secrecy capacity as the maximal communication rate at which messages can be simultaneously transmitted secretly and reliably. The  framework provided by these works has proved instrumental in the advancement of information-theoretic secrecy research area for both practical and theoretical results. Consequently a large body of related literature has emerged~\cite{Blo_Bar},~\cite{survey_1},~\cite{survey_2}, utilizing techniques such as random binning and channel prefixing for a variety of system model configurations and requirements, in which information-theoretic tools are employed to obtain guarantees and assess the performance of system design. \par
%	\textcolor{black}{Moreover, the close relationship between secrecy and probability approximations has been recently illuminated. In \cite{Cuff} and \cite{Yassaee} the authors develop achievability techniques ensuring that statistical conditions such as closeness to a distribution or independence between the message and the eavesdropper's output are satisfied by approximating distributions.} \par  
	Quantification of security with information-theoretic metrics is a well known challenge. Moreover, the study of secrecy rates that are achievable by practical coding schemes constitute a core subject of physical layer security~\cite{kolok_1} - \cite{ath_polar}. Although the maximal rate of secure transmission for the point-to-point case has been characterized, we show in this work that by viewing this problem through local approximations leads to easier to compute solutions. Finally we demonstrate that our solutions apply to the equivalent problem of privacy-utility trade off that includes as special cases the information bottleneck~\cite{bottleneck} and the privacy funneling problem \cite{funnel}.
	\subsection{Contributions}
	Specifically, we explore an alternative path which is built upon the \textit{Euclidean local approximations} proposed in~\cite{Borade_1}, ~\cite{Zheng_1} and~\cite{Zheng_2}. While our model consists of the basic building block for modeling secure transmission over noisy channels, we deviate from the usual treatment for proving achievability and instead solve an optimization problem. We place our setting in an environment where users transmit small amount of information and we impose constraints on the transmission rate and the information leakage. The security constraint corresponds to the strong secrecy requirement~\cite{Maurer} and we show that by assuming distributions around a specific neighborhood we can approximate the secrecy capacity by transforming the mutual information terms into squared weighted Euclidean norms.
	\subsection{Related Work}
	Euclidean local approximations were introduced \textcolor{black}{in \cite{Borade_1} and later extended} by Huang-Zheng in \cite{Zheng_1} and \cite{Zheng_2} as a way to acquire solutions for network information-theoretic problems through linear algebraic methods and to obtain single-letter solutions. This is possible due to a linearization technique where the distributions of interest are close to each other in terms of Kullback–Leibler~(KL) divergence, thus imposing a linear assumption on the distribution space instead of treating it as a manifold. Recently, EIT has gained popularity in the area of statistical inference and machine learning \cite{Huang_1}, \cite{Huang_2} and \cite{Makur_20}. \par 
	Euclidean local approximation-based privacy preserving mechanisms have been developed recently for controlling the data disclosure. In particular, the authors of \cite{Tan_1} inspired by the Chernoff-Stein lemma, formulated the privacy-utility (PUT) problem for hypothesis testing using relative entropy or R\'{e}nyi divergence to measure the utility function and mutual information as a privacy metric. To solve this problem, they develop local approximations and perturb the privacy mechanism around a perfectly private operation point for each pair of hypotheses, and transform it into a semi-definite program. In \cite{Tan_2}, the authors utilize Euclidean approximations to find the optimal error exponent, in a distributed hypothesis testing scenario. A similar line of work \cite{Zhou}, addresses the problem of characterizing the PUT trade-off between the information leakage and the misclassification error, when there is a noisy channel between the detector and the transmitter. As in \cite{Tan_1}, they consider the EIT approach for the high privacy regime and showed that for very noisy channels the approximations are very tight. In \cite{Zamani_eit}, the problem of designing a privacy mechanism is transformed into a linear program via the local geometry approach and the privacy is measured with the $l_1$ norm. A closely related work is~\cite{Gunduz}, where the authors consider the PUT problem under a rate constraint. Specifically, the case of perfect privacy is studied and solved with a linear program. We discuss in detail the differences with this work in Sec.\ref{sec_II}. 
 %The rest of the paper is organized as follows. Section II sets the necessary notation and introduces essential results for the subsequent sections. Section III formulates the problem and Section IV the main results are presented along an example that illustrates the secrecy capacity approximation for the binary symmetric wiretap channel~(B-SWC). Finally, Section IV concludes the paper.
	%
	\section{Preliminaries And Notations}\label{sec_III}
	In this section, we present the required background and set up the notation needed for the theoretical analysis of the following sections.
 
	%
	% \noindent \textbf{Notation}: random variables are denoted by capital letters, their realizations by lower case letters, and their alphabets by capital letters in calligraphic font. A distribution of $X$ defined on alphabet $\mathcal{X}$ is represented by $P_X(\cdot)$. The $l_2$-norm of a vector $\mathbf{a}$ is denoted as $\lVert \mathbf{a} \rVert$, and $\mathbf{a}^{\top}$ is used to denote the transpose of $\mathbf{a}$.
	%	In \cite{Borade_1} - \cite{Zheng_2}, the authors developed a geometric structure to describe a variety of communication problems, motivated by the difficulty of deriving optimal single letter solutions and the effectiveness of this technique in solving data processing problems through the use of linear algebraic methods. 
 
	Assuming that the distributions under consideration are very close to each other allows us to approximate~(locally) the space of distributions by the tangent space through a linearization technique. This allows the description of information-theoretic problems using linear algebraic arguments whereas the KL divergence can be viewed as valid metric on the space of probability distributions.
	Formally, let $P_X$ and $Q_X$ be two probability distributions over a finite alphabet $\mathcal{X}$~($|\mathcal{X}|<\infty$) and assume they are close, in the sense that $Q_X^{(\epsilon)}$ is a perturbation of $P_X$, that is $Q_X^{(\epsilon)} = P_X+\epsilon J$.  $\epsilon\in(0,1)$ so that $\forall x \in \mathcal{X}, 0\leq Q_X^{(\epsilon)}\leq 1$.
		Moreover, the additive perturbation vector $J_u$ is such that
		\begin{align}
			\sum_{x\in \mathcal{X}}J_u(x) &= 0 ,\quad \forall u \in \mathcal{U} \label{valid_prob_1} \\
			\sum_{u\in \mathcal{U}}P_U(u) \cdot J_u(x) &= 0, \quad \forall x\in \mathcal{X} \label{valid_prob_2}
		\end{align} 
		where~\eqref{valid_prob_1} guarantees that $Q_X^{(\epsilon)}$ is a genuine probability vector; and~\eqref{valid_prob_2} secures that the marginal pmf of $X$ is preserved, i.e., $\sum_{u\in \mathcal{U}}P_U(u) P_{X|U=u} = P_X$. Equality in~\eqref{valid_prob_2} in vector form satisfies 
		\begin{equation}
			\sum_{u\in \mathcal{U}}P_U(u)J_u=0. \label{valid_prob_3}
	\end{equation} 
	\textcolor{black}{Let the relative interior of the probability simplex be denoted as $\text{relint}\{\mathcal{P}(\mathcal{X})\}$, and assume that $P_X \in \text{relint}\{\mathcal{P}(\mathcal{X})\}$, i.e., all entries $P_X(x)$ are positive and sum to one. Let the KL divergence between two distributions $P_X$ and $Q_X$ be
		\begin{equation}
            \label{KL_divergence_def}
			D(P_X||Q_X)=\sum_{x\in {\cal X}}P_X(x)\log\frac{P_X(x)}{Q_X(x)},
		\end{equation}
		where standard conventions\footnote{$0\log\frac{0}{0}=0$ and if there is $x$ such that $Q_X(x)=0$, then the KL divergence is plus infinity.} hold,~\cite{Polyanskiy_notes}. Given a perturbation vector $J$, $\epsilon$ can be restricted in an interval $(0, r)$ where $Q_X^{(\epsilon)}(x)>0$ for all $x\in \mathcal{X}$. Indeed take $r=\min_x P_X(x)/|J(x)|$. If $J(x)\geq 0$ clearly $Q_X^{(\epsilon)}(x)>0$. If $J(x)<0$, then $\epsilon<r<P_X(x)/(-J(x))$ and $Q_X^{(\epsilon)}(x)>0$. As $\epsilon$ ranges over the interval $(0,r)$ it holds }
		%\begin{subequations}
        \label{proof_KL}
        \begin{align}
                 \nonumber
                D(P_X||Q_X^{(\epsilon)}) &= \sum_{x\in\mathcal{X}}P_X(x)\log\frac{P_X(x)}{Q_X^{(\epsilon)}(x)}\\ 
                \label{proof_KL_taylor1} 
                &= - \sum_{x\in\mathcal{X}}P_X(x)\log \big(1+ \epsilon \frac{J(x)}{P_X(x)}\big)\\     
                 \label{proof_KL_taylor2} 
                & = \frac{1}{2}\epsilon^2 \sum_{x\in\mathcal{X}} \frac{J(x)^2}{P_X(x)} + o(\epsilon^2),  
%						& = \frac{1}{2} \chi^2(P_X,Q_X^{(\epsilon)}) + o(\epsilon^2),\numberthis \label{proof_KL:chi_divergence} 
        \end{align}
    %\end{subequations}
	where equality in~\eqref{proof_KL_taylor1} is derived using the second order Taylor approximation of the natural logarithm, i.e., $\log(1+x) = x - \frac{x^2}{2} + o(x^2)$, and $o(\epsilon)$ denotes the Bachmann-Landau notation\footnote{Describes the limiting behavior of a function $f:\mathbb{R}^+ \to \mathbb{R}$ such that $\frac{f(\epsilon)}{\epsilon}\to 0$ as $\epsilon \to 0$}. 
%	\eqref{proof_KL:chi_divergence} is obtained if we substitute $J(x)$ with $\frac{Q_X^{(\epsilon)}(x)- P_X(x)}{\epsilon}$ and recall the definition of the $\chi^2$-divergence, i.e.,
%	\begin{align}
%		\chi^2(P_X,Q_X^{(\epsilon)}) = \sum_{x\in\mathcal{X}} \frac{(Q_X^{(\epsilon)}(x)- P_X(x))^2}{P_X(x)},
%	\end{align}
	\textcolor{black}{The equality in~\eqref{proof_KL_taylor2} implies that the KL divergence is approximated by the weighted squared norm of the perturbation vector $J$ where the weights are given by $1/P_X(x)$. Moreover this indicates that $D(P_X||Q_X^{(\epsilon)})$ is locally symmetric and thus can be approximately viewed as a valid norm in a neighborhood of $P_X$, i.e.,
		\begin{equation}
            \label{KL_additive_per}
			D(P_X||Q_X^{(\epsilon)})= \frac{1}{2}\epsilon^2 \parallel J \parallel^2_P + o(\epsilon^2)
			=D(Q_X^{(\epsilon)}||P_X).
		\end{equation}
		It will be convenient for later purposes to change coordinates and pass to the spherical perturbation
		\begin{align}
            \label{L_spherical_pert}
			L= [\sqrt{P}]^{-1} J, 
		\end{align}
    where $[\sqrt{P}]^{-1}$ in~\eqref{L_spherical_pert} denotes the inverse of the diagonal matrix with entries $\{\sqrt{P(x)}, x\in \mathcal{X}\}$. Note that from~\eqref{L_spherical_pert}, constraint in~\eqref{valid_prob_1} takes the form $\sum_{x\in\mathcal{X}}\sqrt{P(x)}L(x) = 0$, or $L^{\top} \sqrt{P_X}=0$, while the KL divergence becomes the squared norm of $L$:
		\begin{align}
			D(P_X||Q_X^{(\epsilon)}) &= \frac{1}{2}\epsilon^2 \parallel L \parallel^2 + o(\epsilon^2). \label{KL_spherical_per}
        \end{align}}
	The following definitions from~\cite{Huang_1} and~\cite{Huang_2} aids in the characterization of the main optimization problem considered in this work.
	\begin{defn}\label{Def_Eregion}
		(\textit{$\epsilon$-region}) Given a distribution $P_X \in \text{relint}\{\mathcal{P}(\mathcal{X})\}$, the set of all distributions in the $\epsilon$-region of $P_X$ is such that
		\begin{align}
          \label{set_arount_e_region}
			\mathcal{N}_\epsilon^\mathcal{X} \overset{\Delta}{=} \{P'_X \in \mathcal{P}(\mathcal{X}) : \chi^2(P'_X,P_X) \leq \epsilon^2\},
		\end{align}
	where $\chi^2(P_X,P'_X) = \sum_{x\in\mathcal{X}} \frac{(P'_X(x)- P_X(x))^2}{P_X(x)}$.
	\end{defn}
	\begin{defn}\label{Def_Divergence_matrix}
		(\textit{Divergence transfer matrix}) Let $X$ and $Y$ be two random variables over finite alphabets $\mathcal{X}$ and $\mathcal{Y}$, respectively. Suppose that $X$ is the input in a discrete memoryless channel with output $P_Y \in \text{relint}\{\mathcal{P}(\mathcal{Y})\}$, and conditional probability distributions $P_{Y|X}$. The divergence transfer matrix (DTM) associated with $P_{Y|X}$ is defined as follows:
		\begin{align}
        \label{DTM_1}
			B_{Y|X} \overset{\Delta}{=} [\sqrt{P_Y}]^{-1} P_{Y|X} [\sqrt{P_X}] , 
			%		 [\sqrt{P_Y}]^{-1} P_{Y,X} [\sqrt{P_X}]^{-1}\nonumber \\
		\end{align}
		where $P_{Y|X}$ denotes the $|\mathcal{Y}| \times |\mathcal{X}|$ left stochastic transition probability matrix. 
		
		%		The singular value decomposition (SVD) of $B_{Y|X}$ takes the form
		%		\begin{align}
			%			B_{Y|X} = \sum_{i=0}^{K-1}\sigma_i \psi^Y_i(\psi^X_i)^{\top}, \label{DTM_2}
			%		\end{align}
		%		where $K=\min\{|\mathcal{X}| ,|\mathcal{Y}|\}$, $\sigma_i$ denotes the $i$-th singular value and $\psi^Y_i$, $\psi^X_i$ are the corresponding left and right singular vectors. 
	\end{defn}
	\noindent The following Lemma stems from Definitions~\ref{Def_Eregion} and~\ref{Def_Divergence_matrix} and a sequence of algebraic manipulations based on equality in~\eqref{proof_KL_taylor2}. 
	\begin{lem}\label{prop_1}
     Given a distribution $P_{X|U=u} \in \mathcal{N}_\epsilon^\mathcal{X}(P_X)$, for all $u \in \mathcal{U}$ and all $\epsilon>0$, it holds that the mutual information for the encoding satisfies
     \begin{subequations}
         \begin{align}
         \label{prop_1_res}
			I(U;X)=\frac{\epsilon^2}{2}\sum_{u\in \mathcal{U}} P_U(u) \cdot \parallel L_u \parallel^2+o(\epsilon^2).
		\end{align}
        The mutual information between the input and the output of the channel $P_{Y|X}$ satisfies
        \begin{align}
        \label{prop_2_res}
			I(U;Y)= \frac{1}{2}\epsilon^2 \sum_{u\in \mathcal{U}} P_U(u)\parallel B_{Y|X}L_u \parallel^2 + o(\epsilon^2),
		\end{align}	
        and the information leakage of the channel $P_{Z|X}$ satisfies
        \begin{align}
        \label{prop_3_res}
			I(U;Z)= \frac{1}{2}\epsilon^2 \sum_{u\in \mathcal{U}} P_U(u)\parallel B_{Z|X}L_u \parallel^2 + o(\epsilon^2) 
		\end{align}
     \end{subequations}
 where $L_u = [\sqrt{P_X}]^{-1}\cdot J_u$ is the spherical perturbation vector in $\mathbb{R}^{|\mathcal{X}|}$; $B_{Y|X}$ and $B_{Z|X}$ are the divergence transition matrices defined in~\eqref{DTM_1} of the legitimate~(main) channel and of the eavesdropper's~(wiretap) channel, respectively.
	\end{lem}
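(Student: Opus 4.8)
The plan is to obtain all three identities from the single second-order estimate~\eqref{KL_spherical_per} by writing each mutual information as an average of Kullback--Leibler divergences and tracking how the additive perturbation is transformed as it passes through a channel. The only auxiliary fact used repeatedly is that $\mathcal{U}$ is finite, so a finite sum of $o(\epsilon^2)$ remainders is again $o(\epsilon^2)$.

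First I would prove~\eqref{prop_1_res}. Starting from the standard decomposition $I(U;X)=\sum_{u\in\mathcal{U}}P_U(u)\,D(P_{X|U=u}\,\|\,P_X)$ and using the hypothesis $P_{X|U=u}\in\mathcal{N}_\epsilon^{\mathcal{X}}(P_X)$, each conditional law is an admissible perturbation $P_{X|U=u}=P_X+\epsilon J_u$ obeying~\eqref{valid_prob_1}. Applying the Taylor expansion~\eqref{proof_KL_taylor1}--\eqref{proof_KL_taylor2} together with the local symmetry~\eqref{KL_additive_per} to each summand gives $D(P_{X|U=u}\,\|\,P_X)=\tfrac{\epsilon^2}{2}\|L_u\|^2+o(\epsilon^2)$ with $L_u=[\sqrt{P_X}]^{-1}J_u$; weighting by $P_U(u)$ and summing over the finite set $\mathcal{U}$ yields~\eqref{prop_1_res}.

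Next, for~\eqref{prop_2_res}, I would again use $I(U;Y)=\sum_{u\in\mathcal{U}}P_U(u)\,D(P_{Y|U=u}\,\|\,P_Y)$ and push the perturbation through the legitimate channel: $P_{Y|U=u}=P_{Y|X}P_{X|U=u}=P_Y+\epsilon\,P_{Y|X}J_u$, so the induced output perturbation is $\widetilde J_u\overset{\Delta}{=}P_{Y|X}J_u$. I would verify that $\widetilde J_u$ is admissible at $P_Y$: $\sum_{y}\widetilde J_u(y)=0$ because $P_{Y|X}$ is left-stochastic and $J_u$ satisfies~\eqref{valid_prob_1}, and $\sum_{u}P_U(u)\widetilde J_u=P_{Y|X}\sum_{u}P_U(u)J_u=0$ by~\eqref{valid_prob_3}, so $P_Y$ is preserved and $P_{Y|U=u}$ stays in $\text{relint}\{\mathcal{P}(\mathcal{Y})\}$ for $\epsilon$ small. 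Passing to spherical coordinates at the output and substituting $J_u=[\sqrt{P_X}]L_u$, Definition~\ref{Def_Divergence_matrix} gives
\begin{equation*}
\widetilde L_u=[\sqrt{P_Y}]^{-1}\widetilde J_u=[\sqrt{P_Y}]^{-1}P_{Y|X}[\sqrt{P_X}]\,L_u=B_{Y|X}L_u .
\end{equation*}
The output analogue of~\eqref{KL_spherical_per} then gives $D(P_{Y|U=u}\,\|\,P_Y)=\tfrac{\epsilon^2}{2}\|B_{Y|X}L_u\|^2+o(\epsilon^2)$, and summing over $\mathcal{U}$ produces~\eqref{prop_2_res}. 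The leakage identity~\eqref{prop_3_res} follows verbatim after replacing $Y$, $P_{Y|X}$, $P_Y$ and $B_{Y|X}$ by $Z$, $P_{Z|X}$, $P_Z$ and $B_{Z|X}$.

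The proof needs no hard estimate beyond the second-order expansion already carried out in~\eqref{proof_KL_taylor1}--\eqref{proof_KL_taylor2}; the points that need care are bookkeeping ones. One should check that the pushed-forward perturbations $\widetilde J_u$ keep $P_{Y|U=u}$ (and its $Z$-analogue) positive and of size $O(\epsilon)$, so that the local expansion applies with the same order of accuracy --- here it is convenient to note that every admissible $L_u$ is orthogonal to $\sqrt{P_X}$, the top unit right singular vector of $B_{Y|X}$, whence $\|B_{Y|X}L_u\|\le\|L_u\|$ and $P_{Y|U=u}$ indeed lies in an $\epsilon$-region of $P_Y$. One should also confirm that aggregating the per-$u$ remainders into a single $o(\epsilon^2)$ is legitimate, which is immediate because $|\mathcal{U}|<\infty$.
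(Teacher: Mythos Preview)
Your proposal is correct and follows essentially the same route as the paper: both write each mutual information as $\sum_u P_U(u)\,D(P_{\cdot|U=u}\|P_\cdot)$, push the additive perturbation through the channel via the Markov chain, and apply the second-order expansion~\eqref{proof_KL_taylor2}/\eqref{KL_spherical_per}. The only cosmetic difference is that the paper re-derives the Taylor expansion of the log at the output, whereas you more cleanly identify the output spherical perturbation as $\widetilde L_u=B_{Y|X}L_u$ and directly invoke~\eqref{KL_spherical_per}; your extra bookkeeping (admissibility of $\widetilde J_u$, the contraction $\|B_{Y|X}L_u\|\le\|L_u\|$, and the finite-sum handling of the $o(\epsilon^2)$ terms) is sound and in fact makes explicit points the paper leaves implicit.
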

 \begin{proof}
     The proof is presented in Appendix A. 
 \end{proof}
	\section{Problem Formulation} \label{sec_II}
	We consider the standard wiretap channel model in which Alice transmits a message $U\in \mathcal{U}$ over a discrete memoryless wiretap channel (DM-WTC) with input $X\in \mathcal{X}$ and outputs $Y \in \mathcal{Y}$, $Z \in \mathcal{Z}$, as shown in Fig.~\ref{nestedpc}. Secure communication is established when Alice sends a message reliably to Bob, while Eve receives no useful information regarding the transmitted message. Formally, the secrecy capacity for the one-shot version of a physically degraded wiretap channel, namely a channel for which the eavesdropper channel is worse than the legitimate link is given by:
	\begin{align}
		C_s = \max_{P_X} I(X;Y) - I(X;Z), \label{sec_cap}
	\end{align}
	where $X \to Y\to Z$ forms a Markov chain and thus channel prefixing is not needed, as choosing $X$ is optimal~\cite{Csiz_Kor}. Usually, achievability is ensured by the stochasticity of the encoding process in combination with codebook binning~\cite{Blo_Bar}. Coding binning  aims to create a confusion rate $I(X;Z)$, while guaranteeing successful decoding at rates up to $I(X;Y)$. Next we view the secrecy capacity under the prism of the following optimization problem
	\begin{subequations}
		\label{eq:optim1}
		\begin{align}
			C_s=\underset{P_X}{\text{max}}
			& \quad  I(X;Y)   \label{eq:cost1}\\
			\text{subject to} 
			%	& \quad I(U;X) \leq  \gamma, \label{eq:const1_1}\\
			& \quad I(X;Z) \leq  \Theta, \label{eq:const1_2}
		\end{align}
	\end{subequations}
	\begin{figure}[!htb]
		\centering
		\def\svgwidth{.7\linewidth}
		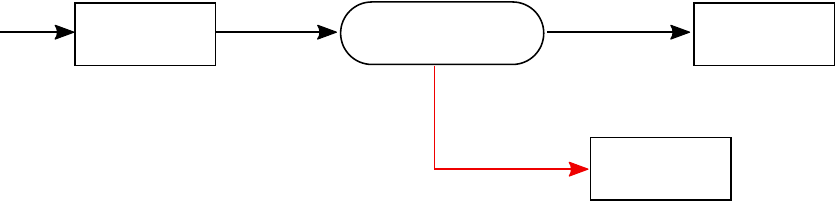
		%	\includegraphics[height = 8cm, width=12cm]{figs/model1}
		% where an .eps filename suffix will be assumed under latex, 
		% and a .pdf suffix will be assumed for pdflatex; or what has been declared
		% via \DeclareGraphicsExtensions.
		\vspace{0.02cm}
		\caption{The wiretap channel model.}
		\label{nestedpc}
	\end{figure}
	where $\Theta$ is a small and positive threshold on the leakage rate such that the adversary's best strategy is to guess the message. Note that problem in~\eqref{sec_cap} has the same operational meaning with the one defined in~\eqref{eq:optim1} by not allowing leakage rates above $\Theta$. We now go one step further and propose a different formulation influenced by the local approximation method~\cite{Huang_1} and its extension to multi-user networks~\cite{Huang_2}. For a fixed input distribution $P_X \in \text{relint}\{\mathcal{P}(\mathcal{X})\}$, we seek to find $p_U$ and $p_{X|U}$ that are consistent with the Markov assumption and maximize the mutual information over the legitimate link subject to rate and leakage constraints:
	\begin{subequations}
		\label{eq:optim2}
		\begin{align}
			\underset{P_U , P_{X|U}}{\text{max}}
			& \quad  I(U;Y)   \label{eq:cost2}\\
			\text{subject to} 
			& \quad I(U;X) \leq  R, \label{eq:const2_1}\\
			& \quad I(U;Z) \leq  \Theta, \label{eq:const2_2}
		\end{align}
	\end{subequations}
	where $U \to X \to (Y,Z)$ forms a Markov chain.
 
    Maximizing $I(U;Y)$ can be interpreted as the maximization of the data rate of information transmitted from Alice to Bob, whereas constraint \eqref{eq:const2_1} is the compression rate of information modulated in $X$, which as noted in \cite{Zheng_1}, can be interpreted as how efficiently we can send a small amount of information through the legitimate channel instead of solving the problem of how many bits in total we can transmit. Constraint \eqref{eq:const2_2} controls the information leakage, which ideally is close to zero, that is Eve's observation $Z$ is statistically independent from the transmitted message $U$. Note that if $R\leq \Theta$, the data processing inequality $I(U;Z)\leq I(U;X)$ and therefore the leakage constraint always holds and can be removed. In practice, we typically expect $\Theta<R$ and therefore the imposition of both constraints is meaningful.		
	Problem \eqref{eq:optim2} is equivalent to the privacy-utility tradeoff between two parties, an agent and a utility provider. Indeed, in \cite[Problem (2)]{Gunduz} the agent wishes to maximize the shared information with the utility provider under a rate constraint whereas leakage does not exceed a prespecified threshold. The random variable $X$ is observed by the agent and is correlated with $Y$ and $S$. Considering the Markov chain $(Y,S) \to X \to U$ and substituting $S=Z$ and $\Omega= \Theta$,  in conjunction with the fact that the Markov property is preserved when the direction of information flow is reversed, shows that \eqref{eq:optim2} and the problem in \cite{Gunduz} are indeed equivalent.  
	\section{Local Approximation of Secrecy Capacity}
	Adopting the local approximations from Lemma~\ref{prop_1}, the maximization problem of~\eqref{eq:optim2} is transformed into the following linear algebra problem
	\begin{subequations}
			\label{eq:optim3}
			\begin{align}
				\underset{P_U,L_u}{\text{max}}
				& \quad  \sum_{u\in \mathcal{U}} P_U(u)\parallel B_{Y|X}L_u \parallel^2, \label{eq:cost3}\\
				\text{subject to} 
				& \quad \sum_{u\in \mathcal{U}} P_U(u) \cdot \parallel L_u \parallel^2 \leq \frac{2R}{\epsilon^2}, \label{eq:const3_1}\\
                & \quad  \sum_{u\in \mathcal{U}} P_U(u)\parallel B_{Z|X}L_u \parallel^2 \leq \frac{2\Theta}{\epsilon^2}, \label{eq:const3_4} \\
				& \quad \sqrt{P_X}^{\top} L_u = 0, \quad \forall u, \label{eq:const3_2}\\
				& \quad \sum_{u\in \mathcal{U}}P_U(u)[\sqrt{P_X}]L_u = \mathbf{0}, \label{eq:const3_3}\\				
                & \quad  \sum_{u\in \mathcal{U}} P_U(u)=1. \label{eq:const3_5} 
			\end{align}
		\end{subequations}
        Note that if the matrix $[\sqrt{P_X}]$ is of full rank, it holds that the constraint in~\eqref{eq:const3_3} is equivalent to $\sum_{u\in \mathcal{U}}P_U(u)L_u = 0$. Additionally, in problem~\eqref{eq:optim2} we ignored all the $ o(\epsilon^2)$ terms and explicitly impose the additional constraints~\eqref{eq:const3_2},~\eqref{eq:const3_3} and~\eqref{eq:const3_5} which guarantee the validity of the perturbed pmf.\\
	%
	%Next we simplify problem \eqref{eq:optim3} and derive an approximate secrecy capacity formula using the local transformations. $B_{Y|X}$ and $B_{Z|X}$ are the divergence transition matrices defined in~\eqref{DTM_1}
    Let $V$ and $\Lambda$ be two symmetric and non-negative definite matrices, such that $V=B_{Y|X}^{\top} B_{Y|X}$ and $\Lambda = B_{Z|X}^{\top} B_{Z|X}$, with $B_{Y|X}$ and $B_{Z|X}$ in~\eqref{DTM_1}. The optimization problem in~\eqref{eq:optim3} is approximately equivalent to the following: 
		\begin{subequations}
				\label{eq:optim5}
				\begin{align}
					\underset{P_U,{L_u}}{\text{min}}
					& \quad  -\sum_{u\in\cal{U}}P_U(u)L_u^{\top}VL_u \label{eq:cost5}\\
					\text{subject to} 
					& \quad\sum_{u\in\cal{U}}P_U(u)L_u^{\top}L_u \leq \frac{2R}{\epsilon^2}, \label{eq:const5_1}\\
					& \quad \sum_{u\in\cal{U}}P_U(u)L_u^{\top}\Lambda L_u \leq \frac{2\Theta}{\epsilon^2}, \label{eq:const5_4} \\
					& \quad\sqrt{P_X}^{\top}L_u=0 ,  \qquad  \forall u, \label{eq:const5_2}\\
					& \quad \sum_{u\in \mathcal{U}}P_U(u) [\sqrt{P_X}] L_u =  \mathbf{0},  \label{eq:const4_3}\\
                     & \quad  \sum_{u\in \mathcal{U}} P_U(u)=1, \qquad P_U(u)\geq 0. \label{eq:const5_5} 
				\end{align}
			\end{subequations}
		For a given distribution $P_U \in \Delta(\mathcal{U})$, the resulting optimization is a quadratically constrained quadratic program~(QCQP), denoted by $\texttt{QCQP}(\bm p_0,V,\Lambda,P_X,R,\Theta, \bm L_0)$. Additionally, given a vector ${L_u}$, minimizing over $P_U$ amounts to a linear program~(LP), denoted by $\texttt{LP}(\bm L_1,V,\Lambda,R,\Theta,\bm p_0)$. The combination of the two steps above leads to an iterative alternating procedure illustrated in Algorithm~\ref{altoptim}.
    \begin{algorithm}
	\algrenewcommand\algorithmicrequire{\textbf{Input:}}
	\algrenewcommand\algorithmicensure{\textbf{Output:}}
	\caption{\texttt{Alternating optimization for $P_U$ and $L_u$}}\label{altoptim}
	\begin{algorithmic}[1]
		\Require $V, \Lambda, P_X, R, \Theta, \epsilon$.
		\Ensure  $P_U, L_u$
       \State $\text{Initialize :\quad} \bm p_0$, $\bm L_0$ , $\text{tol} = 10^{-6}, \text{pErr} = 1$   
        \While{$ \text{pErr} < \text{tol} $} \Comment{ }
       \State $\bm L_1 = \texttt{QCQP}(\bm p_0,V,\Lambda,P_X,R,\Theta, \bm L_0)$
       \State $\bm p_1 = \texttt{LP}(\bm L_1,V,\Lambda,R,\Theta,\bm p_0)$
       \State $\text{pErr} = \norm{\bm p_1 - \bm p_0}$
       \State $\bm L_0 = \bm L_1$ , $\bm p_0 = \bm p_1$
        \EndWhile
       \State $P_U = \bm p_1 $, $[L_u] = \bm L_1$
	\end{algorithmic}
\end{algorithm}

\noindent A detailed analysis of the alternating optimization approach will be available in the extended version of this work. 
    
    \noindent \textcolor{black}{ Consider the Lagrangian associated with problem \eqref{eq:optim5}:
		\begin{align}
			&\mathcal{L}(L_u,\nu,\rho,\pmb\xi,\pmb\mu) \nonumber\\
			& = -\sum_{u\in \mathcal{U}} P_U(u) L_u^{\top} V L_u  + \nu(\sum_{u\in \mathcal{U}} P_U(u)L_u^{\top} \Lambda L_u-\Theta) \nonumber \\ 
			&+ \rho(\sum_{u\in \mathcal{U}} P_U(u)L_u^{\top} L_u - R) + \sum_{u\in \mathcal{U}} \xi(u) \sqrt{P_X}^{\top} L_u \nonumber \\
			&+ \sum_{u\in \mathcal{U}} P_U(u) \pmb\mu^{\top} L_u, \label{EqLagrangian_1}		
		\end{align}	
		where $\pmb\xi$, $\pmb\mu$ are vectors of size $|\mathcal{U}|$ and $|\mathcal{X}|$ respectively. It is assumed that a constraint qualification holds and the KKT conditions are applicable. Let,
		\begin{align}
  \label{kmatrix}
			K(\nu,\rho)= - V+\nu \Lambda + \rho I,
		\end{align}
  where $I$ denotes the identity matrix.
		Note that the dual function $\mathcal{G}(\nu,\rho,\pmb\xi,\pmb\mu) = \inf_{L_u}\mathcal{L}$ is finite and positive, i.e., $0<\mathcal{G}<\infty$, provided that $K \succeq 0$. The KKT conditions lead to the following result.
		\begin{thm}\label{theorem_1} Let $L^{\star}_u$ and $\nu^{\star},\rho^{\star},\pmb\xi^{\star},\pmb\mu^{\star}$ be the optimal solutions of the primal and the dual problem, respectively. Then
			\begin{align}
					\pmb\mu^{\star} &= - \left(\sum_{{u'}\in \mathcal{U}} \xi_{u'}\right) \sqrt{P_X}, \label{Eq_theorem_1}\\
					\pmb\xi^{\star} &= \left(\sum_{{u'}\in \mathcal{U}} \xi_{u'}\right) P^{\star}_U(u) \quad ;\text{and} \label{Eq_theorem_2}\\
					\forall u \in \mathcal{U} \quad K(\nu^{\star},&\rho^{\star})L^{\star}_u = 0. \label{Eq_theorem_3}
				\end{align}
    where the matrix $K$ is in~\eqref{kmatrix}.
		\end{thm}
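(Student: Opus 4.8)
The plan is to write down the Karush–Kuhn–Tucker stationarity conditions for the Lagrangian \eqref{EqLagrangian_1} with respect to the variables $L_u$ and then extract the three claimed identities by exploiting the two linear feasibility constraints \eqref{eq:const5_2} and \eqref{eq:const4_3}. First I would differentiate $\mathcal{L}(L_u,\nu,\rho,\pmb\xi,\pmb\mu)$ with respect to $L_u$ for a fixed $u\in\mathcal{U}$. Since every quadratic term contributes $2P_U(u)(\cdot)L_u$ and the two linear terms contribute $\xi(u)\sqrt{P_X}$ and $P_U(u)\pmb\mu$, the stationarity condition reads
\begin{align}
-2P_U(u)VL_u + 2\nu P_U(u)\Lambda L_u + 2\rho P_U(u)L_u + \xi(u)\sqrt{P_X} + P_U(u)\pmb\mu = \mathbf{0}. \label{Eq_stationarity}
\end{align}
Using the definition $K(\nu,\rho) = -V+\nu\Lambda+\rho I$ from \eqref{kmatrix}, the first three terms collapse to $2P_U(u)K(\nu^\star,\rho^\star)L^\star_u$, so at the optimum
\begin{align}
2P_U^\star(u)K(\nu^\star,\rho^\star)L^\star_u = -\xi^\star(u)\sqrt{P_X} - P_U^\star(u)\pmb\mu^\star. \label{Eq_stationarity2}
\end{align}

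Next I would eliminate the right-hand side using the linear constraints. Left-multiplying \eqref{Eq_stationarity2} by $\sqrt{P_X}^{\top}$ and invoking \eqref{eq:const5_2}, namely $\sqrt{P_X}^{\top}L^\star_u = 0$, together with the fact that $K$ is symmetric so $\sqrt{P_X}^{\top}K L^\star_u = (KL^\star_u)^{\top}\sqrt{P_X}$ — wait, more directly: the left side need not vanish unless I first establish \eqref{Eq_theorem_3}. So the cleaner route is to sum \eqref{Eq_stationarity2} over $u\in\mathcal{U}$. The sum of the left side is $2K(\nu^\star,\rho^\star)\sum_u P_U^\star(u)L^\star_u$, which by \eqref{eq:const4_3} (with $[\sqrt{P_X}]$ of full rank, as noted after \eqref{eq:optim3}) equals $\mathbf{0}$. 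The sum of the right side is $-\big(\sum_u\xi^\star(u)\big)\sqrt{P_X} - \big(\sum_u P_U^\star(u)\big)\pmb\mu^\star = -\big(\sum_u\xi^\star(u)\big)\sqrt{P_X} - \pmb\mu^\star$, using \eqref{eq:const5_5}. Setting this to zero yields \eqref{Eq_theorem_1}, the expression for $\pmb\mu^\star$. Substituting \eqref{Eq_theorem_1} back into \eqref{Eq_stationarity2} makes the two right-hand-side terms combine into $-\xi^\star(u)\sqrt{P_X} + P_U^\star(u)\big(\sum_{u'}\xi^\star(u')\big)\sqrt{P_X}$, which must be proportional to $\sqrt{P_X}$; then left-multiplying by $\sqrt{P_X}^{\top}$ and using $\sqrt{P_X}^{\top}\sqrt{P_X}=1$ together with $\sqrt{P_X}^{\top}K L^\star_u$ — and here I need $\sqrt{P_X}^{\top}L^\star_u=0$ from \eqref{eq:const5_2} plus the observation that $\sqrt{P_X}$ is itself annihilated appropriately — forces $\xi^\star(u) = P_U^\star(u)\big(\sum_{u'}\xi^\star(u')\big)$, i.e. \eqref{Eq_theorem_2}. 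Finally, plugging both \eqref{Eq_theorem_1} and \eqref{Eq_theorem_2} into \eqref{Eq_stationarity2} cancels the entire right-hand side, leaving $2P_U^\star(u)K(\nu^\star,\rho^\star)L^\star_u = \mathbf{0}$; for every $u$ in the support of $P_U^\star$ this gives \eqref{Eq_theorem_3}, and the $L^\star_u$ for $u$ outside the support are irrelevant (or can be set to $0$).

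I expect the main obstacle to be the bookkeeping order: the three identities are genuinely coupled, and one must be careful that the step producing \eqref{Eq_theorem_3} does not circularly assume it. The resolution is the one sketched above — derive $\pmb\mu^\star$ first by summing the stationarity equations and using the preservation constraint \eqref{eq:const4_3}, then derive $\pmb\xi^\star$ by projecting onto $\sqrt{P_X}$ and using $\sqrt{P_X}^\top L_u^\star = 0$, and only then conclude $K L_u^\star = 0$. A secondary technical point worth stating explicitly is the full-rank assumption on $[\sqrt{P_X}]$ (guaranteed since $P_X\in\mathrm{relint}\,\mathcal{P}(\mathcal{X})$), which is what lets us pass from \eqref{eq:const4_3} to $\sum_u P_U^\star(u)L^\star_u=\mathbf{0}$; without it the cancellation in the summed stationarity condition is weaker. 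I would also note that the constraint qualification assumed before the theorem statement is what justifies applying stationarity at $L^\star_u$ in the first place, so no separate argument for existence of multipliers is needed.
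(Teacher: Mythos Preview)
Your overall skeleton matches the paper's: write the stationarity condition \eqref{Eq_stationarity2}, sum over $u$ and use \eqref{eq:const4_3} to isolate $\pmb\mu^\star$, then substitute back to obtain an equation of the form $2P_U^\star(u)K L_u^\star = \alpha(u)\sqrt{P_X}$ with $\alpha(u)=P_U^\star(u)\sum_{u'}\xi^\star(u')-\xi^\star(u)$, and finally argue that $\alpha(u)=0$.

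The divergence is in that last step, and your version has a real gap. You propose to left-multiply by $\sqrt{P_X}^{\top}$ and use $\sqrt{P_X}^{\top}L_u^\star=0$, but that kills the left side only if $\sqrt{P_X}^{\top}K L_u^\star=0$, i.e.\ only if $K\sqrt{P_X}$ is a scalar multiple of $\sqrt{P_X}$. Your phrase ``$\sqrt{P_X}$ is itself annihilated appropriately'' is exactly where this must be justified, and you do not justify it. The fix is short but not trivial: from the DTM definition one checks $B_{Y|X}\sqrt{P_X}=\sqrt{P_Y}$ and $B_{Y|X}^{\top}\sqrt{P_Y}=\sqrt{P_X}$, so $V\sqrt{P_X}=\sqrt{P_X}$ and likewise $\Lambda\sqrt{P_X}=\sqrt{P_X}$; hence $K\sqrt{P_X}=(\nu^\star+\rho^\star-1)\sqrt{P_X}$ and indeed $\sqrt{P_X}^{\top}K L_u^\star=(\nu^\star+\rho^\star-1)\sqrt{P_X}^{\top}L_u^\star=0$. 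With this line inserted your argument goes through.

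The paper does \emph{not} use this eigenvector property. Instead it works only with $K\succeq 0$: it orthogonally diagonalizes $K=Q\,\mathrm{diag}(\Delta_1,0)\,Q^{\top}$ with $\Delta_1\succ 0$, rewrites $K L_u^\star=c(u)\sqrt{P_X}$ in the $Q$-coordinates, and shows by contradiction that $c(u)\neq 0$ would force $Q^{\top}\sqrt{P_X}=0$. So your route (once patched) is shorter and exploits the specific DTM structure of $V,\Lambda$; the paper's route is longer but uses only positive semidefiniteness of $K$. An even quicker variant you might consider: left-multiply $2P_U^\star(u)K L_u^\star=\alpha(u)\sqrt{P_X}$ by $L_u^{\star\top}$ instead, obtaining $L_u^{\star\top}K L_u^\star=0$ directly from \eqref{eq:const5_2}; since $K\succeq 0$ this already gives $K L_u^\star=0$, and then $\alpha(u)=0$ follows because $\sqrt{P_X}\neq 0$.
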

		\begin{proof}
			The proof is provided in Appendix B.
		\end{proof}
	}
	%
	% \noindent As a consequence we obtain the next corollary regarding the approximated secrecy capacity.
	% \begin{cor}\label{cor_5}
	% 	Let $L^{\star}_u$ be the generalized eigenvector of the pair $(V-\nu \Lambda,\rho I )$ corresponding to the generalized eigenvalue of $(V-\nu \Lambda) L_u = \rho L_u$, then the secrecy capacity of the wiretap channel can be approximated by:
	% 	\begin{align}
	% 		\textcolor{black}{C^{LIC}_s =  \rho \frac{2R}{\epsilon^2} + \nu \frac{2\Theta}{\epsilon^2} . \label{sec_cap_app}}
	% 	\end{align}
	% 	for $\rho, \nu \geq 0$ and $\rho \neq 1$.
	% \end{cor}
	%

	\subsection{Example: Binary Symmetric Wiretap Channel}
	Next we provide an illustrative example. We consider a BSC(p) as the legitimate channel and a BSC(q) as the eavesdropper's channel. The channel transition probability matrices $W^{(m)}$ and $W^{(e)}$ are:
	\begin{equation*}
		W^{(m)}=
		\begin{bmatrix}
			1-p & p \\
			p & 1-p 
		\end{bmatrix} \quad \text{and} \quad 
		W^{(e)}=
		\begin{bmatrix}
			1-q & q\\
			q & 1-q
		\end{bmatrix} 
	\end{equation*}
	% \begin{equation*}
		
		% \end{equation*}
	and $P_X$ is $[\frac{1}{2},\frac{1}{2}]^{\top}$. Now we can calculate $P_Y$ and the corresponding DTM $B_{Y|X}$ as
	\begin{align}
		P_Y &= W^{(m)} P_X = [\frac{1}{2},\frac{1}{2}]^{\top} \quad \text{and} \nonumber\\
		B_{Y|X} & =  [\sqrt{P_Y}^{-1}]W^{(m)} [\sqrt{P_X}] = W^{(m)},\nonumber
	\end{align}
	similarly we compute $P_Z$ and $B_{Z|X}$. The resulting DTMs are symmetric and non-negative definite, thus by the spectral theorem the eigenvalues are non-negative real numbers. We then have:
	\begin{equation*}
		V={B_{Y|X}}^{\top} B_{Y|X}=
		\begin{bmatrix}
			(1-p)^2+p^2 & 2p(1-p) \\
			2p(1-p) & (1-p)^2+p^2 
		\end{bmatrix},
	\end{equation*}
	\begin{equation*}
		\Lambda={B_{Z|X}}^{\top} B_{Z|X}=
		\begin{bmatrix}
			(1-q)^2+q^2 & 2q(1-q) \\
			2q(1-q) & (1-q)^2+q^2 
		\end{bmatrix} 		
	\end{equation*}
	\textcolor{black}{The $2 \times 2$ dimensional symmetric matrices $V, \Lambda$ share a common eigenvalue-eigenvector pair $(1,\sqrt{P_X})$ and there is a unique vector $\tau$ of length one that is orthogonal to $\sqrt{P_X}$. Then the matrix $Q = [\sqrt{P_X}, \tau ]$ necessarily diagonalizes $V$ and $\Lambda$ simultaneously, i.e.
		\begin{align}
			V = Q \Delta_{V} Q^{\top} \quad \text{and} \quad \Lambda = Q \Delta_{\Lambda} Q^{\top},
		\end{align}
		where $\Delta_{V} = \text{diag}\{1,\lambda_V\}$ and $\Delta_{\Lambda} = \text{diag}\{1,\lambda_{\Lambda}\}$. Recall that $K=Q \Delta Q^{\top}$ with $\Delta = \text{diag}\{\nu +\rho  - 1, \nu \lambda_{\Lambda}+\rho -\lambda_V \}$, with $\nu +\rho  - 1\geq 0 $, $\nu \lambda_{\Lambda}+\rho -\lambda_V \geq 0$ and at least one of the two equal to zero.}\newline
	\textcolor{black}{Suppose that $\nu \lambda_{\Lambda}+\rho -\lambda_V > 0$, then condition \eqref{Eq_theorem_3} from Theorem~\ref{theorem_1} implies
		\begin{align}
			\begin{bmatrix}
				\nu +\rho  - 1 & 0 \\
				0 & \nu \lambda_{\Lambda}+\rho -\lambda_V 
			\end{bmatrix} \times \begin{bmatrix} \sqrt{P_X}^{\top} L^{\star}_u \\ \tau^{\top} L^{\star}_u \end{bmatrix} = 0,
		\end{align}
		which implies $\tau^{\top} L^{\star}_u = 0$ or $L^{\star}_u = 0$ which is a contradiction. Hence it follows that $\nu \lambda_{\Lambda}+\rho = \lambda_V$,
		and constraint \eqref{Eq_theorem_3} holds for any $L_u$. Checking the remaining constraints leads to $L_u = s(u) \tau$, which follows from the orthogonality constraint $\sqrt{P_X}^{\top} L_u =0$ and the observation that $L_u$ must be colinear with vector $\tau$. Therefore, the other constraints become
		\begin{align*}
			\sum_u P_U(u) s(u) &= 0, \quad \sum_u P_U(u) s^2(u) \leq R \quad \text{and} \\
			&\sum_u P_U(u) s^2(u) \leq \frac{\Theta}{\lambda_{\Lambda}},
		\end{align*}
		where the last inequality constraint follows from $\tau^{\top} \Lambda \tau = \lambda_{\Lambda}$.}\newline
	\textcolor{black}{Suppose that both inequality constraints are binding then it is evident that this can occur in the exceptional case for which $\lambda_{\Lambda} = \frac{\Theta}{R}$ and consequently the approximated secrecy capacity becomes:
		\begin{align}\label{binarySEC_Cap}
			\sum_u P_U(u) L^{\top}_u V L_u = \sum_u P_U(u) s^2(u)  \lambda_V = R \lambda_V,
		\end{align}
		which is achievable for $L_u = s(u) \tau$ with $\tau$ being orthogonal to $\sqrt{P_X}$ and the vector $s$ is orthogonal to $P_U(\cdot)$ with squared length equal to $R$.
		Next, consider the case when only the leakage constraint is active. Then $\rho = 0 $, $\nu = \frac{\lambda_{V}}{\lambda_{\Lambda}}$ and $\nu +\rho  - 1 \geq 0$ ~(or $\lambda_{V} \geq \lambda_{\Lambda}$). Then the inequality constraint require $R \geq \frac{\Theta}{\lambda_{\Lambda}} $, thus $\frac{\Theta}{R} \leq \lambda_{\Lambda} \leq \lambda_{V}$ and the approximated secrecy capacity becomes:
		\begin{align}\label{binarySEC_Cap_2}
			\sum_u P_U(u) L^{\top}_u V L_u = \sum_u P_U(u) s^2(u)  \lambda_V = \frac{\Theta}{\lambda_{\Lambda}} \lambda_V.
		\end{align}
		Finally, for the case in which only the rate constraint is active it follows that $\nu =0$, $\rho = \lambda_V $, with $\lambda_V \geq 1$ and $R \geq \frac{\Theta}{\lambda_{\Lambda}} $. Therefore $\lambda_{\Lambda} \leq \frac{\Theta}{R}$ and the approximated secrecy capacity becomes:
		\begin{align}\label{binarySEC_Cap_3}
			\sum_u P_U(u) L^{\top}_u V L_u = \sum_u P_U(u) s^2(u)  \lambda_V = R \lambda_V.
		\end{align}
	}
	Thus considering the above and Theorem~\ref{theorem_1} we have that
	\begin{equation}
		\label{sec_cap_app_bsc}
		C_s^{BSC} = \begin{cases}
			\frac{2}{\epsilon^2} R \lambda_{V} &\text{if $\lambda_{\Lambda} \leq \delta$,}\\
			\frac{2}{\epsilon^2} \frac{\Theta}{\lambda_{\Lambda}} \lambda_V &\text{if $\delta \leq \lambda_{\Lambda} \leq \lambda_{V},$}
		\end{cases}
	\end{equation}
	%\frac{2}{\epsilon^2}\Bigg[1 - \frac{p(1-p)}{q(1-q)}(1+\delta)\Bigg], \label{sec_cap_app_bsc}
	where $\delta = \Theta \mathbin{/} R$. Figure \ref{licsecvssec} compares the local approximated secrecy capacity with the secrecy capacity in~\cite{Csiz_Kor} for the BSWC, i.e. $C_s = H_b(q)- H_b(p)$, where $H_b(\cdot)$ is the binary entropy function.
	
	\begin{figure}[!htb]
%		\centering
%			\def\svgwidth{\linewidth}
		%	\input{figure.png}
		\includegraphics[width=\linewidth]{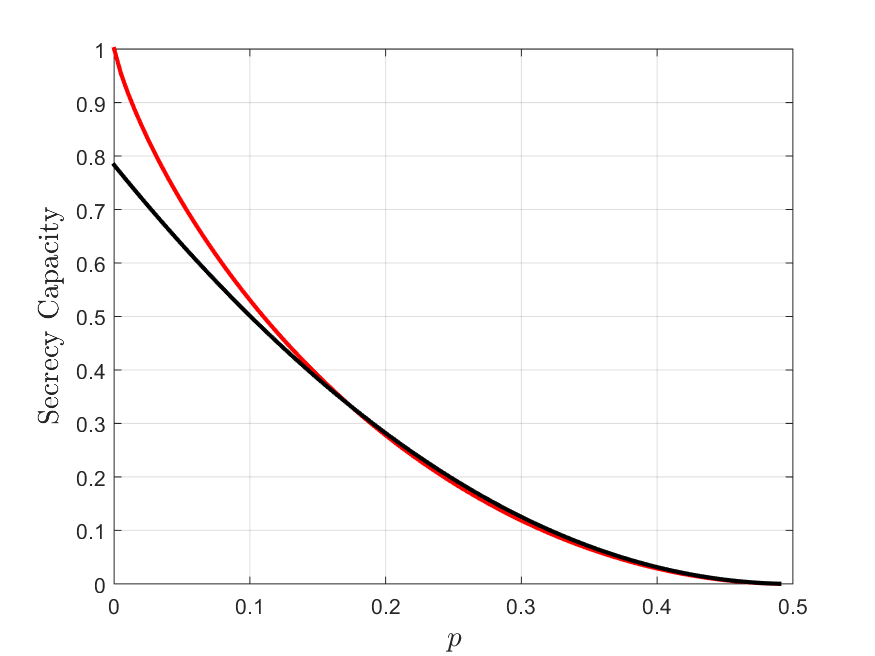}
		% where an .eps filename suffix will be assumed under latex, 
		% and a .pdf suffix will be assumed for pdflatex; or what has been declared
		% via \DeclareGraphicsExtensions.
		%	\vspace{0.02cm}
		\caption{The black line is the approximation of \eqref{sec_cap_app_bsc} in comparison with $C_s = H_b(q)- H_b(p)$~(red line) for $\epsilon = 10^{-3}$, $\delta=0.085$ and $q=0.45$}
		\label{licsecvssec}
	\end{figure}
	
	\section{Discussion}
	In this work we have shown that local approximations based on Euclidean information theory can be utilized to linearize an information-theoretic security problem. Through a series of transformations we modified the standard wiretap channel into a simple linear algebra problem and derived an approximation of the secrecy capacity, which for the high secrecy regime seems to provide a good estimation of the regular secrecy capacity.

    \section*{Appendix A}
	\section*{Proof Lemma \ref{prop_1}}
     Let the distributions of interest $P_{X|U=u}$, $\forall u \in \mathcal{U}$ belong to the $\epsilon$-region of the reference pmf $P_X$, $\mathcal{N}_\epsilon^\mathcal{X}(P_X)$. Setting $Q_{X,u}^{(\epsilon)}=P_{X|U=u}=P_X+\epsilon J_u$ and expressing the mutual information for the encoding as 
		\begin{align}
  \label{proof_lem_1}
			I(U;X) &= \sum_{u\in \mathcal{U}} P_U(u) \cdot D(P_{X|U=u}|| P_X)\\
   \label{proof_lem_2}
            &= \sum_{u\in \mathcal{U}} P_U(u) \cdot D(Q_{X,u}^{(\epsilon)}|| P_X).
		\end{align}
		Utilizing equality~\eqref{KL_spherical_per} leads to the following
  \begin{align}
				I(U;X)=\frac{\epsilon^2}{2}\sum_{u\in \mathcal{U}} P_U(u) \cdot \parallel L_u \parallel^2+o(\epsilon^2), \label{prop_1_res1}
			\end{align}
			where $L_u = [\sqrt{P_X}]^{-1}\cdot J_u$ is the spherical perturbation vector in $\mathbb{R}^{|\mathcal{X}|}$. The mutual information between the input and the output of the channel $P_{Y|X}$ satisfies
			\begin{align}
				P_{Y|U=u} &= W^{(m)}P_{X|U=u} =W^{(m)}(P_X + \epsilon J_u) \nonumber \\ 
				&=P_Y + \epsilon W^{(m)} [\sqrt{P_X}] L_u, \label{utility_1}
			\end{align}
			where $W^{(m)} := P_{Y|X}$ is the $|\mathcal{Y}|\times |\mathcal{X}|$ transition matrix of the legitimate (main) channel and since $U\to X \to Y$ is a Markov chain. Using \eqref{utility_1} and \eqref{KL_spherical_per} we can rewrite $I(U;Y)$ as follows:
			\begin{subequations}
				\label{proof_utility:one}
				\begin{align}
					I(U;Y) &= \sum_{u\in \mathcal{U}} P_U(u) \cdot D(P_{Y|U=u}||P_Y) \nonumber \\
					&= \sum_{u\in \mathcal{U}} P_U(u) \sum_{y\in \mathcal{Y}} P_{Y|U=u} \log \frac{P_{Y|U=u}}{P_Y} \nonumber \\ 
					&= \sum_{u\in \mathcal{U}} P_U(u) \sum_{y\in \mathcal{Y}} P_{Y|U=u} \log \big(1+ \epsilon \frac{W^{(m)}J_u}{P_Y}\big) \nonumber \\ 
					&= \frac{1}{2}\epsilon^2 \sum_{u\in \mathcal{U}} P_U(u) \sum_{y\in \mathcal{Y}} \frac{(W^{(m)}J_u)^2}{P_Y} + o(\epsilon^2) \nonumber\\ 
					&=\frac{1}{2}\epsilon^2 \sum_{u\in \mathcal{U}} P_U(u) \cdot \parallel [\sqrt{P_Y}]^{-1}W^{(m)} [\sqrt{P_X}] \cdot L_u \parallel^2 \nonumber \\
                    &+ o(\epsilon^2) \nonumber\\
					& = \frac{1}{2}\epsilon^2 \sum_{u\in \mathcal{U}} P_U(u)\parallel B_{Y|X}L_u \parallel^2 + o(\epsilon^2) \Tsupersqueezeequ,
				\end{align}
			\end{subequations}
			where $B_{Y|X} = [\sqrt{P_Y}]^{-1}W^{(m)} [\sqrt{P_X}]$.\\
    \noindent Working with the Markov chain $U\to X \to Z$ along similar lines we obtain
    \begin{align}
				I(U;Z)= \frac{1}{2}\epsilon^2 \sum_{u\in \mathcal{U}} P_U(u)\parallel B_{Z|X}L_u \parallel^2 + o(\epsilon^2). \label{prop_3_res1}
			\end{align}
    Which completes the proof.

	\section*{Appendix B}
	\section*{Proof Theorem \ref{theorem_1}}
	
	\begin{proof}
		\noindent Differentiating the Lagrangian in \eqref{EqLagrangian_1} associated with problem \eqref{eq:optim5} with respect to $L_u$ gives:
			\begin{align}
				2P_U(u)K L^{\star}_u + \xi^{\star}(u)\sqrt{P_X} +  P_U(u) \mu^{\star} = 0 \quad \forall u \in \mathcal{U}. \label{pu1}
			\end{align}
			Then summing over all $u$:
			\begin{align}
				2 K \sum_u P_U(u) L^{\star}_u + \sum_u P_U(u) \xi^{\star}(u) \sqrt{P_X} + \mu^{\star} = 0. \label{pu2}
			\end{align}
			Recalling constraint \eqref{eq:const4_3}, the first term is zero and the above expression yields:
			\begin{align}
				\mu^{\star} = - \sum_u \xi^{\star}(u) \sqrt{P_X}. \nonumber
			\end{align}
			Substitution of \eqref{Eq_theorem_1} in \eqref{pu1} gives:
			\begin{align}
				K L^{\star}_u = c(u) \sqrt{P_X}, \label{pu3}
			\end{align}
			where \begin{align}
				c(u) = \bigg[ \frac{1}{2} \sum_u \xi(u) - \frac{\xi(u)}{2P_U(u)} \bigg]. \label{pu4}
			\end{align}
			We prove next that $c(u)=0$, for all $u \in \mathcal{U}$. \newline
	 
			Since $K$ is symmetric, it is orthogonally diagonalizable, i.e., it can be written as $K = Q \Delta Q^{\top}$ where $Q$ is an orthogonal matrix and $\Delta$ has the form:
			\begin{equation*}
				\def\arraystretch{1}
				\Delta =\begin{blockarray}{(ccc|ccc)l}
					\lambda_1&&0&\BAmulticolumn{3}{c}{\multirow{3}{*}{\huge$0$}}&\multirow{3}{*}{}\\&\rdots&&&&&\\0&&\lambda_k&&&&\\
					\cline{1-6}
					\BAmulticolumn{3}{c|}{\multirow{3}{*}{\huge$0$}}&0&&0&\multirow{3}{*}{}\\&&&&\rdots&&\\&&&0&&0&\\
				\end{blockarray}
				= \begin{blockarray}{(ccc|ccc)l}
					\Delta_1 & 0\\
					0 & 0
				\end{blockarray}
			\end{equation*}  
			where $\Delta_1>0$, since $\lambda_1>0$, for $i=\{1,2,...,k\}$ (because $K$ is nonnegative symmetric). Next we write \eqref{pu4} as:
			\begin{align}
				(Q \Delta Q^{\top}) L^{\star}_u &= c(u) \sqrt{P_X} \quad \text{or}  \label{new_KKT}\\
				\Delta \tilde{L}_u &= c(u) \tilde{p}, \label{new_KKT_1}
			\end{align}
			where $\tilde{L}_u = Q^{\top} L^{\star}_u$, $\tilde{p} = Q^{\top} \sqrt{P_X}$. 			
			We partition $Q$ conformably as  $Q^{\top}=\begin{bmatrix} Q^{\top}_1 \\Q^{\top}_2 \end{bmatrix}$ and similarly for $\tilde L_u=\begin{bmatrix}\tilde{L}^1_u \\ \tilde{L}^2_u \end{bmatrix}$ and $\tilde p=\begin{bmatrix}\tilde {p}_1 \\ \tilde {p}_2\end{bmatrix}$.  
			Then we have that
			\begin{align}
				\Delta_1 \tilde{L}^1_u = c(u) \tilde {p}_1 \quad \text{and} \quad 0=c(u) \tilde {p}_2. \label{KKT_new_2}
			\end{align}
			Suppose that $c(u)\neq 0$ then $\tilde{p}_2 = 0$ and considering the orthogonality constraint \eqref{eq:const5_2} gives:
			\begin{align}
				\sqrt{P_X}^{\top} L_u = \sqrt{P_X}^{\top} Q Q^{\top} L_u = \tilde{p}^{\top} \tilde{L}_u = \tilde{p}^{\top}_1 \tilde{L}^1_u.
			\end{align}
			%			\tilde{p}_1\Delta_1^{-1}\tilde{p}_1\frac{\xi(u)}{2 P_U(u)}
			But $\tilde{L}^1_u = - \Delta^{-1}_1  c(u) \tilde {p}_1 $ hence $(\tilde {p}^{\top}_1 \Delta^{-1}_1 \tilde {p}_1)c(u) = 0$, implying that $\tilde {p}_1 = 0$ and then $\tilde{p} = Q^{\top} \sqrt{P_X} = 0$ or $ \sqrt{P_X} = 0$, which is a contradiction. Therefore, $c(u)= 0$, for all $u \in \mathcal{U}$ and the proof is completed.
			
\end{proof}
	

\begin{thebibliography}{1}
		
		%\bibitem{Wyner}A. D. Wyner, ``The wire-tap channel,'' \emph{Bell Syst. Tech. J.}, vol. 54, no. 8, pp. 1355–1367, Oct. 1975.
				\bibitem{kolok_1} A. Katsiotis, N. Kolokotronis, and N. Kalouptsidis, ``Secure encoder designs based on turbo codes,'' in \textit{2015 IEEE International Conference on Communications (ICC)}, 2015, pp. 4315–4320.
		
		\bibitem{ath_Sparcs} E.M. Athanasakos, ``Secure coding for the gaussian wiretap channel with sparse regression codes,'' in \textit{2021 55th Annual Conference on Information Sciences and Systems (CISS)}, 2021, pp. 1–5.
		
		\bibitem{ldpc_1} A. Subramanian, A. Thangaraj, M. Bloch, and S. W. McLaughlin, ``Strong secrecy on the binary erasure wiretap channel using large-girth ldpc codes,'' \textit{IEEE Transactions on Information Forensics and Security}, vol. 6, no. 3, pp. 585–594, 2011.
		
		\bibitem{ath_polar} E.M. Athanasakos and G. Karagiannidis, ``Strong Secrecy for Relay Wiretap Channels with Polar Codes and Double-Chaining,'' GLOBECOM 2020 - 2020 IEEE Global Communications Conference, pp. 1-5, 2020.

%5
  \bibitem{bottleneck} N. Tishby, F. Pereira, and W. Bialek, ``The information bottleneck method,'' \textit{arXiv preprint physics/0004057}, 2000.
%6		
		\bibitem{funnel} A. Makhdoumi, S. Salamatian, N. Fawaz and M. Médard, ``From the Information Bottleneck to the Privacy Funnel,'' 2014 IEEE Information Theory Workshop (ITW 2014), pp. 501-505, 2014.
%7		
		\bibitem{Borade_1}
		S. Borade and L. Zheng, ``Euclidean information theory,'' \emph{Int. Zurich Seminar on Communications (IZS)}, March 12-14, 2008.
		
		\bibitem{Zheng_1}
		S.-L. Huang and L. Zheng, ``Linear information coupling problems,'' in proc. \emph{IEEE International Symposium on Information Theory}, July 2012.
%9		
		\bibitem{Zheng_2}
		S. Huang, C. Suh, and L. Zheng, ``Euclidean information theory of networks,'' \emph{IEEE Trans. on Inform. Theory}, vol. 61, no. 12, pp. 6795–6814, 2015.
%10		
		\bibitem{Maurer} U. Maurer and S. Wolf, ``Information-theoretic key agreement: From weak to strong secrecy for free,'' in \emph{Int. Conf. Theory Applications Cryptographic Techn.}, Bruges, Belgium, May 2000, pp. 351–368.
%11		
		\bibitem{Huang_1} S. -L. Huang, A. Makur, L. Zheng and G. W. Wornell, ``An information-theoretic approach to universal feature selection in high-dimensional inference,'' in proc. \emph{IEEE International Symposium on Information Theory (ISIT)}, 2017, pp. 1336-1340.
%12		
		\bibitem{Huang_2}S. -L. Huang, A. Makur, L. Zheng and G. W. Wornell, ``On Universal Features for High-Dimensional Learning and Inference,'' \emph{Foundations and Trends in Communications and Information Theory}, 2021.
%13		
		% \bibitem{Ulukus} W. Kang and S. Ulukus, ``A New Data Processing Inequality and Its Applications in Distributed Source and Channel Coding'', \emph{IEEE Trans. on Inform. Theory}, vol. 57, no. 1, pp. 56–69, 2011.
		
		\bibitem{Makur_20}A. Makur, G. W. Wornell and L. Zheng, ``On Estimation of Modal Decompositions,'' in proc. \emph{IEEE International Symposium on Information Theory (ISIT)}, 2020, pp. 2717-2722.
		
		
		\bibitem{Tan_1} J.~ Liao, L.~Sankar, V.~Y.~F.~Tan and F.~Calmon,``Hypothesis Testing under Mutual Information Privacy Constraints in the High Privacy Regime'', https://arxiv.org/abs/1704.08347, 2017.
		
		\bibitem{Tan_2} A.Gilani, S.B.Amor, S. Salehkalaibar, V.Y.F.Tan, ``Distributed Hypothesis Testing with Privacy Constraints'', \emph{Entropy}, 21, 478, 2019.
		
		\bibitem{Zhou}L.~Zhou and D.~Cao, ``Privacy-Utility Tradeoff for Hypothesis Testing Over A Noisy Channel'', https://arxiv.org/abs/2006.02869, 2021.
		
		\bibitem{Zamani_eit} A. Zamani, T. J. Oechtering and M. Skoglund, ``Data Disclosure With Non-Zero Leakage and Non-Invertible Leakage Matrix,'' \emph{IEEE Transactions on Information Forensics and Security}, vol. 17, pp. 165-179, 2022.
		
	%18	
		\bibitem{Gunduz} S. Sreekumar and D. Gündüz, ``Optimal Privacy-Utility Trade-off under a Rate Constraint,'' 2019 IEEE International Symposium on Information Theory (ISIT), pp. 2159-2163, 2019.

   
  \bibitem{Polyanskiy_notes} Y. Polyanskiy and Y. Wu, ``Lecture notes on information theory'', 2019.http://people.lids.mit.edu/yp/homepage/data/itlecturesv5.pdf
  %20

		\bibitem{Csiz_Kor}I. Csisz\'{a}r and J. K\"{o}rner, ``Broadcast channels with confidential messages,'' \emph{IEEE Trans. Inf. Theory}, vol. 24, no. 3, pp. 339–348, May 1978.
	%21	
		\bibitem{Blo_Bar} M. Bloch and J. Barros, \emph{Physical-Layer Security: from Information Theory to Security Engineering}. Cambridge, UK: Cambridge University Press, 2011.

%    \bibitem{Athan_LIC_SEC}
%	 E.M.Athanasakos, N.Kalouptsidis and H.Manjunath ``Secrecy Analysis with Local Approximations'', \textit{under review}.
  
		% \bibitem{survey_1} M. Bloch, M. Hayashi and A. Thangaraj, ``Error-Control Coding for Physical-Layer Secrecy,'' \emph{Proceedings of the IEEE}, vol. 103, no. 10, pp. 1725-1746, Oct. 2015.
		
		% \bibitem{survey_2} M. Bloch et al., ``An Overview of Information-Theoretic Security and Privacy: Metrics, Limits and Applications,'' \emph{IEEE Journal on Selected Areas in Information Theory}, vol. 2, no. 1, pp. 5-22, March, 2021.
		
%		\bibitem{Cuff} P. Cuff, ``Communication in networks for coordinating behavior,'' Ph.D dissertation, Dept. Elect. Eng., Stanford Univ., Stanford, CA, USA, Jul. 2009.
%		
%		\bibitem{Yassaee} M. H. Yassaee, M. R. Aref, and A. A. Gohari, ``Achievability proof via
%		output statistics of random binning,'' \emph{IEEE Trans. Inf. Theory}, vol. 60, no. 11, pp. 6786–6760, Nov. 2014.
		

		
%		\bibitem{Anantharam_13}V. Anantharam, A. Gohari, S. Kamath, and C. Nair, ``On hypercontractivity and the mutual information between boolean functions,'' in proc. \emph{51st Annual Allerton Conference on Communication, Control and Computing}, Illinois, USA, October,  pp. 13–19, 2013.
%		
%		\bibitem{Anantharam_12}S. Kamath and V. Anantharam, ``Non-interactive simulation of joint distributions: The Hirschfeld-Gebelein-Renyi maximal correlation and the hypercontractivity ribbon," in Proc. \emph{50th Annual Allerton Conference on Communication, Control and Computing}, Illinois, USA, pp. 1057-1064, October, 2012.
%		
%		\bibitem{Anathram_13_2}V. Anantharam, A. Gohari, S. Kamath, and C. Nair, ``On maximal correlation, hypercontractivity, and the data processing inequality studied by Erkip and Cover.'' arXiv:1304.6133 [cs.IT], April 2013.
%		
%		\bibitem{Renyi}
%		A. R\'{e}nyi, ``On measures of dependence,'' \emph{Acta Mathematica Academiae Scientiarum Hungarica}, vol. 10, no. 3-4, pp. 441-451, 1959.
		
		
		
		
		
%		\bibitem{Asoodeh} S. Asoodeh, F. Alajaji and T. Linder, ``On maximal correlation, mutual information and data privacy,'' in proc. \emph{IEEE 14th Canadian Workshop on Information Theory (CWIT)}, 2015, pp. 27-31.
		
%		\bibitem{Li_Gamal}
%		C. T. Li and A. El Gamal, ``Maximal correlation secrecy,'' \emph{IEEE Trans. Inf. Theory}, 64(5): 3916-3926, 2018.
		
%		\bibitem{Boyd} Boyd, S., Vandenberghe, L. ``Convex Optimization''. Cambridge: Cambridge University Press, 2004.
		
		
		%\url{http://people.lids.mit.edu/yp/homepage/data/itlectures_v5.pdf}.
		
		
		
%		\bibitem{Polyanskiy_16} Y. Polyanskiy and Y. Wu,``Strong data-processing inequalities for channels and Bayesian networks'', https://arxiv.org/abs/1508.06025, 2017.
%		
%		\bibitem{Poliyanski} Y.~Polyanskiy and Y.~Wu, ``Dissipation of information in channels with input constraints,'' \emph{IEEE Trans. Inf. Theory}, 62(1): 35-55, 2016.
		
%		\bibitem{Golub}
%		G.~H.Golub and C.~F.~Van Loan, \emph{Matrix computations,volume 3}. The Johns Hopkins University Press, 2012.
		
%		\bibitem{Croot}
%		E. Croot, \emph{The Rayleigh principle for finding eigenvalues}. Technical report, Georgia Institute of Technology, School of Mathematics, 2005.
		
		%\bibitem{Raginsky}M. Raginsky, ``Strong data processing inequalities and $\phi$-Sobolev inequalities for discrete channels,'' November, 2014.
		
%		\bibitem{Makur_1}A.Makur, and L.Zheng, ``Bounds between Contraction Coefficients.''in proc.\emph{ 53rd Annual Allerton Conference on Communication, Control, and Computing (Allerton)}, Monticello, IL, 2015, pp. 1422-29.
%		
%		\bibitem{Dragomir} S. S. Dragomir, V. Gluscevic, ``Some Inequalities for the Kullback-Leibler and $\chi^2$-Distances in Information Theory and Applications'', \emph{Tamsui Oxford Journal of Mathematical Sciences}, 17, 2, 97-111, 2001.
%		
%		\bibitem{Gibbs}A. L. Gibbs and F. E. Su, ``On choosing and bounding probability metrics,'' \emph{Int. Stat. Rev.}, vol. 70, no. 3, pp. 419-435, 2002.
		
		%\bibitem{Sason} I. Sason, ``Tight bounds for symmetric divergence measures and a new inequality relating f -divergences,'' in proc. \emph{IEEE Information Theory Workshop (ITW)}, Jerusalem, Israel, April 26 - May 1 2015.

	\end{thebibliography}
\end{document}